\providecommand{\U}[1]{\protect\rule{.1in}{.1in}}
\newtheorem{theorem}{Theorem}
\newtheorem{conjecture}[theorem]{Conjecture}
\newtheorem{corollary}[theorem]{Corollary}
\newtheorem{definition}[theorem]{Definition}
\newtheorem{proposition}[theorem]{Proposition}
\newtheorem{remark}[theorem]{Remark}
\newenvironment{proof}[1][Proof]{\noindent\textbf{#1.} }{\ \rule{0.5em}{0.5em}}
\begin{document}
\title{Quantum cellular automata without particles}
\author{David A. Meyer }
\email{dmeyer@math.ucsd.edu}
\affiliation{Department of Mathematics,  University of California,  San Diego, 
 La Jolla,  California 92093-0112,  USA}

\author{Asif Shakeel}
\email{ashakeel@ucsd.edu}
\affiliation{Department of Mathematics,  University of California,  San Diego, 
 La Jolla,  California 92093-0112,  USA}

\date{January 15, 2016}

\begin{abstract}
Quantum cellular automata (QCA) constitute space and time homogeneous discrete models for quantum field theories (QFTs).  Although QFTs are defined without reference to
particles, computations are done in terms of Feynman diagrams, which are
explicitly interpreted in terms of interacting particles.  Similarly, the
easiest QCA to construct are quantum lattice gas automata (QLGA).  A natural
question then is, which QCA are not  QLGA?  Here we 
construct a non-trivial  example of such a QCA; it provides a simple model in  $1+1$
dimensions with no particle interpretation at the scale where the QCA  dynamics are homogeneous.
\end{abstract}

\maketitle

\section{Introduction} \label{section:intro}
The famous talk in which Feynman proposed the idea of quantum computers was 
entitled ``Simulating physics with computers''~\cite{bib:Feynman}.  In it he
takes the point of view that to simulate physics on a computer, space and time
should be discretized, and the dynamics should be local and causal; he comments that 
a natural architecture would be a cellular automaton.  Since his goal is to 
simulate quantum physics, this would have to be a {\it quantum\/} cellular automaton
(QCA).  We may think of this as a discrete quantum field theory, discrete in
space and time, but also with only a finite dimensional Hilbert space associated
with each spacetime lattice site.  Such a system should be able to simulate, for 
example, scattering of particles in $\phi^4$ quantum field theory, as has recently 
been shown to be possible with a quantum gate array architecture~\cite{bib:JLP}.

The investigation of QCA dates back to early papers of Grossing and Zeilinger~\cite{gz:qca}, of Meyer~\cite{bib:meyer2,dm:u1dqca}, and of Durr and Santha~\cite{dls:dpwfqca}, some of which focused on the technically easier case of periodic boundary conditions. One of the main difficulties has always been to construct explicit models in which the conditions of translation invariance, unitarity, locality and causality are simultaneously satisfied.  The easiest way forward has been to reinterpret translation invariance in order to allow quantum lattice gas automata (QLGA) models.  We may  note that  studying  (classical) lattice gas automata as a representative class within  (classical) cellular automata (CA) has precedent in the works of  Toffoli~\cite{tcm:whcalg}  and H\'{e}non~\cite{Henon}.  

The more recent work, within the last decade, of Schumacher and Werner~\cite{sw:rvqca}, Arrighi, {\it et al.}~\cite {anw:odqca}, and of Shakeel and Love~\cite{bib:slwqcaqlga} has developed frameworks within which more general models satisfying the requisite conditions can be constructed.   Schumacher and Werner~\cite{sw:rvqca}, and Gross, {\it et al.}~\cite{bib:itodqwca}, have developed an axiomatic
definition of QCA based on the {\it desiderata\/} of unitarity, locality and 
causality, as well as of translation invariance (spatial homogeneity)~\cite{sw:rvqca}. In this formalism, the  QCA dynamics are described on a  {\it quasi-local} algebra, by which they mean an  increasing chain of  finitely many tensor products of finite dimensional $C^*$-algebras.  The QCA evolution is in the Heisenberg picture, and given by an automorphism of quasi-local  algebra by  {\it local rules}~\cite{sw:rvqca, bib:itodqwca}.  Within this setup, an  {\it index theory} for a classification of  one-dimensional QCA was developed in~\cite{bib:itodqwca}. A parallel  definition of one-dimensional  QCA  by  Arrighi, {\it et al.}~\cite{anw:odqca}  is given in the Schr\"{o}dinger picture, i.e., one in which a QCA is defined on a  Hilbert space and evolves unitarily. In this case the Hilbert space has as its basis  {\it finite but unbounded  configurations} (finitely many active cells in a quiescent background), and  the QCA state in it evolves  by a unitary, casual and translation-invariant {\it global  evolution} operator.   Shakeel and Love~\cite{bib:slwqcaqlga}, working in the latter formalism  and building on it, found  conditions that determine when a  multi-dimensional QCA is a  QLGA. They investigated  QLGA because they are simple models for QCA and have obvious physical interpretations~\cite{bib:meyer2}. Their characterization uses a different set of algebraic substructures than the ones used in the index theory in~\cite{bib:itodqwca}, and is motivated  by  the goal of  classification of multi-dimensional QCA explicitly by the form of the global  evolution.  They provide examples of QCA that are not QLGA, even in $1+1$ dimensions.  These 
examples, however, do not propagate information, which leaves open the question 
of the existence of  QCA in the Schr\"{o}dinger picture model which have no particle interpretation, but 
which nevertheless propagate information.  Analogously, the $\phi^4$ quantum 
field theory simulation result~\cite{bib:JLP} leaves open the question of how 
efficient a simulation is possible in regimes or for initial conditions not 
interpretable as scattering particles.  

Our goal here is to show by a simple  $(1+1)$-dimensional construction in the Schr\"{o}dinger picture that there are discrete quantum field theories (QCA) that propagate information, yet without particle interpretation (i.e., not QLGA) at  the time scale where the QCA dynamics are homogeneous.  In the Heisenberg picture,  such examples are contained in the   Clifford QCA  presented in  the work of Schlingemann, {\it et al.}~\cite{svw:oscqca}. Ours is  another step in the direction of classifying  QCA by the dynamical processes at play in the transfer of information among cells. This is  a program akin to that which has been carried out for  reversible classical CA by Kari~\cite{Kari:ocdsrca}, Toffoli~\cite{tof:ccurca}, and Toffoli and Margolus~\cite{tm:icaar}.

This paper is organized as follows. Section~\ref{section:prelim} recalls the definition of QCA and QLGA, and the condition that determines when a QCA is a QLGA. Section~\ref{sec:catqlga} begins with  the example that constitutes the main part of this paper: a one-dimensional QCA that cannot be described as a QLGA,  although it is built by  concatenating  two QLGA.  Analysis of this example leads us  to a conjecture generalizing this method of  constructing QCA to arbitrary lattice dimensions, cell Hilbert spaces and neighborhoods. Useful as this technique is,  there are QCA that  are not obtainable  in this manner, as  the last proposition of the section shows. Section~\ref{sec:conc}  is the conclusion  with a summary of the results and a discussion of the wider context in which our work stands.

\section{Preliminaries} \label{section:prelim}
  
We consider the QCA model as formulated in~\cite{bib:slwqcaqlga}. In general,   the lattice of cells is $\mathbb{Z}^n$.  For the reader who is not specifically interested in the technical details encountered in defining Hilbert spaces and operators  over infinite lattices, the lattice of cells can be taken as finite.  This does not affect  any of the interesting aspects of the example QCA  we discuss. Wherever needed,  we provide equivalent (and simpler) definitions for  finite lattices,   after the definitions for the infinite lattice.  A finite lattice is of the form $\mathbb{Z}^n/(N_1\times \cdots \times N_n)$ for some finite positive integers $N_i$, $1\leq i \leq n$, with the ends wrapping cyclically, or in other words, it is  a torus. Unless we are exclusively considering the infinite lattice, in which case we will make that explicit, we  denote a lattice  by $\mathcal{L}$.  Over each cell is an identical  finite dimensional Hilbert space $W$. A cell has a finite  neighborhood $\mathcal{E} = \{e_1,e_2,\ldots,e_r\} \subset \mathcal{L}$ of size $r$, which specifies the surrounding cells that influence its evolution. The neighborhood of cell $x \in \mathcal{L}$ is denoted by $
\mathcal{E}_x= x+\mathcal{E}.$ 
  Let $\mathcal{B}$ be an orthonormal basis  of the cell Hilbert space $W$, 
\begin{equation*}
\mathcal{B} = \{ \left| b \right\rangle \}.
\end{equation*}
 For the infinite lattice $\mathcal{L}=\mathbb{Z}^n$, basis elements of the QCA Hilbert space are constructed as  sequences consisting  of a finite region of cells in {\it active\/} states immersed in a background of cells in a fixed (unit norm) {\it quiescent\/} state $\left| q \right\rangle \in W$. Hence this basis is called the {\it finite configurations\/} basis,
denoted by  $\mathcal{C}$,
\begin{align}   \label{sofc}
\mathcal{C} = \bigg\{& \bigotimes_{x \in \mathbb{Z}^n}  \left| b^x \right\rangle :
\left| b^x \right\rangle \in \mathcal{B},     \nonumber \\ & \text{all but a  finite number of } \left| b^x \right\rangle  = \left| q \right\rangle\bigg\},
\end{align}
where  $\left| b^x \right\rangle$ is the cell basis element at $x$. 
$\mathcal{C}$ is orthonormal with respect to  the  inner product induced on it from that on $W$.   The QCA Hilbert space  is the $\ell^2$-completion of $\mathcal{C}$, and is called the  {\it Hilbert space of finite configurations},   denoted by $\mathcal{H}_\mathcal{C}$.   This definition of  the  finite configurations  basis  ensures that   it is  countable, so that the Hilbert space of finite configurations is separable  (in the topological sense).  The naming convention given here is adopted from~\cite{anw:odqca}; in the mathematics literature, this construction is called an {\it incomplete infinite tensor product\/} space, as in~\cite{ag:shrt,vn:idp}.

For a  finite lattice, we use the same terminology as for the infinite lattice, except the Hilbert space on which  the  QCA evolves,  $\mathcal{H}_\mathcal{C}$,  is the usual tensor product space,
\begin{equation*}
\mathcal{H}_\mathcal{C} = \bigotimes_{x \in \mathcal{L}} W
\end{equation*}
with  the  basis $\mathcal{C}$ given by,
\begin{equation*}
\mathcal{C}=\bigl\{ \bigotimes_{x \in \mathcal{L}} \left| b^x \right\rangle :
\left| b^x \right\rangle \in \mathcal{B}\}.
\end{equation*}

A QCA evolves by a unitary transformation on $\mathcal{H}_\mathcal{C}$, called  the {\it global evolution} which we  denote by  $\mathcal{G}$.  It is required to be

\begin{enumerate}[label=(\roman{*})]

\item \label{transinvqca} \textit{Translation invariant}:  A translation operator $\tau_z$, for some $z \in \mathcal{L}$, is defined by its action on an element $\bigotimes_{x \in \mathcal{L}} \vert b^x\rangle \in {\mathcal{C}}$:
\begin{equation*}
\tau_z:  \bigotimes_{x \in \mathcal{L}}  \left| b^x \right\rangle  \mapsto \bigotimes_{x \in \mathcal{L}} \left| b^{x+z} \right\rangle
\end{equation*}
$\mathcal{G}$ is {\it translation invariant\/} if  $\tau^{\vphantom{-1}}_z   \mathcal{G}   \tau^{-1}_z = \mathcal{G} $ for all $z \in  \mathcal{L}$. 
Note $\tau_z$ is unitary, i.e.,  $\tau^{-1}_z =  \tau^{\dag}_z$.
 
\item \label{causalqca} \textit{Causal relative to a neighborhood $\mathcal{E}$}: 
$\mathcal{G}$ is {\it causal\/} relative to a neighborhood $\mathcal{E}$ if  for every pair $\rho, \rho'$,  of   density operators on ${\mathcal{H}}_{\mathcal{C}}$, and  $x \in \mathcal{L}$, that satisfy:
\begin{equation*}
\rho|_{\mathcal{E}_x} = \rho'|_{\mathcal{E}_x}\text{,}
\end{equation*}
 the  operators $\mathcal{G} \rho \mathcal{G}^\dag, \mathcal{G}\rho' \mathcal{G}^\dag$  satisfy
\begin{equation*}
\mathcal{G} \rho \mathcal{G}^\dag |_x =  \mathcal{G} \rho' \mathcal{G}^\dag |_x
\end{equation*}
\end{enumerate}
The state of a QCA is given by a density operator on ${\mathcal{H}}_{\mathcal{C}}$.

\begin{remark}
For the rest of the discussion, the {\it neighborhood\/} for a given QCA is  the unique minimal set (under set inclusion) satisfying  the causality condition above.
\end{remark}

The concept of causality can  equally well be discussed  with respect to the  evolution, under conjugation by $\mathcal{G}$, of operators {\it local\/} upon a finite number of cells, i.e., non-identity on those cells  and  identity on all other cells. 

To define local  operators for the infinite lattice, we first look at the Hilbert space  ${\mathcal{H}}_{\mathcal{C}}$ as composed of a finite part and a countably infinite part. 
\begin{definition} \label{coDdef} Let $D \subset \mathbb{Z}^n$ be a  finite subset. Define the  set of {\it co}-$D$ configurations to be ${\mathcal{C}}_{\overline{D}} := \{ \bigotimes_{i \in \mathbb{Z}^n\setminus D} \vert {c}_i \rangle : {c}_i \in {Q},  \text{ all but finite } \vert {c}_{i} \rangle  = \left| q \right\rangle \}$.  Let the inner product on ${\mathcal{C}}_{\overline{D}} $ be induced by the inner product on $W$\, as in the case of   ${\mathcal{C}}$.  Then the {\it co}-$D$ space, denoted by  $\mathcal{H}_{{\mathcal{C}}_{\overline{D}}}$, is defined as  the  completion  of $\text{span}({\mathcal{C}}_{\overline{D}})$ under the induced $l^2$ norm.
\end{definition}

To be able to refer to operators on a finite subset of tensor factors in the infinite lattice case, we simply embed  $\bigotimes_{j \in D} \begin{rm}{End}\end{rm}(W)$ into a subalgebra of $B({\mathcal{H}}_{\mathcal{C}})$    (the algebra  of bounded linear operators on $\mathcal{H}_{\mathcal{C}}$),
\begin{align} \label{algembd}
  \iota_D:     \bigotimes_{j \in D} \begin{rm}{End}\end{rm}(W)  &\hookrightarrow B({\mathcal{H}}_{\mathcal{C}}) \\ 
  a &\mapsto a \otimes \mathbb{I}_{\overline{\mathcal{D}}} \nonumber 
\end{align}
where $a$ is an element of $\bigotimes_{j \in D} \begin{rm}{End}\end{rm}(W)$, and $\mathbb{I}_{\overline{\mathcal{D}}}$ is the identity operator on the co-$D$ space, $\mathcal{H}_{{\mathcal{C}}_{\overline{D}}}$. Through the embedding $\iota_D$~\eqref{algembd} the algebra $\bigotimes_{j \in D} \begin{rm}{End}\end{rm}(W)$ is  isomorphic to the corresponding finite dimensional  subalgebra of $B({\mathcal{H}}_{\mathcal{C}})$.  Then, for the infinite lattice, we define local operators as follows.

\begin{definition}
A linear operator $M$ on ${\mathcal{H}}_{\mathcal{C}}$ is {\it local} upon a finite subset $D \in \mathbb{Z}^n$ if it is in the  image of the  map $\iota_D$~\eqref{algembd}.
\end{definition}

For a finite lattice,  the operators local upon $D  \subset \mathcal{L}$ are
\begin{align*} \bigotimes_{j \in D} \begin{rm}{End}\end{rm}(W)  \otimes \bigotimes_{j \in \mathcal{L}\setminus D}  \mathbb{I},
\end{align*}
where $\begin{rm}{End}\end{rm}(W)$ is the set of linear operators on $W$, and $\mathbb{I}$ is the identity operator on $W$.

The next theorem, relating causality to evolution of local algebras,  will need  the {\it reflected} neighborhood, denoted by $\mathcal{V}$,
\begin{align*} 
\mathcal{V} = -\mathcal{E}
\end{align*} 
As with the neighborhood, the reflected neighborhood of cell $z$ is denoted by $\mathcal{V}_z = z+\mathcal{V}$.

The expression of causality   in this  picture is given by  the  {\it structural reversibility}  theorem due to Arrighi, {\it et al.} in~\cite{anw:odqca}.  A proof of this theorem is  in~\cite{bib:slwqcaqlga}. 
\begin{theorem}[Structural Reversibility]  \label{strucreva}
  Let $\mathcal{G} : \mathcal{H}_{\mathcal{C}} \longrightarrow \mathcal{H}_{\mathcal{C}}$ be a unitary  operator  and $\mathcal{E}$ a neighborhood.  Then the following are equivalent.
 \begin{enumerate}[label=(\roman{*})] 
\item \label{strv1} $\mathcal{G}$ is causal relative to the  neighborhood $\mathcal{E}$. 
\item \label{strv2}  For every  operator $A_z$ local upon cell $z$, $\mathcal{G}^\dag A_z \mathcal{G}$ is local upon  $\mathcal{E}_z$.
\item \label{strv3} $\mathcal{G}^\dag$ is causal relative to the  reflected neighborhood $\mathcal{V}$. 
\item \label{strv4}  For every  operator $A_z$ local upon cell $z$, $\mathcal{G} A_z \mathcal{G}^\dag$ is local upon  ${\mathcal{V}}_z$.
\end{enumerate}   
\end{theorem}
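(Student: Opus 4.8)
The plan is to prove the circle of equivalences through the two ``Heisenberg-picture'' statements: first (i)$\Leftrightarrow$(ii), then (ii)$\Leftrightarrow$(iv), and finally to observe that (iii)$\Leftrightarrow$(iv) is nothing but (i)$\Leftrightarrow$(ii) applied to the unitary $\mathcal{G}^{\dag}$ with finite set $\mathcal{V}$ in place of $\mathcal{E}$ (note $(\mathcal{G}^{\dag})^{\dag}=\mathcal{G}$ and $\mathcal{V}_z=z+\mathcal{V}$). These three links close the loop among all four statements.

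For (i)$\Leftrightarrow$(ii) the bridge is the trace identity $\mathrm{Tr}\!\big[(\mathcal{G}\rho\mathcal{G}^{\dag})\,A_z\big]=\mathrm{Tr}\!\big[\rho\,(\mathcal{G}^{\dag}A_z\mathcal{G})\big]$ together with the elementary \emph{locality-detection} fact: a bounded operator $B$ on $\mathcal{H}_{\mathcal{C}}$ is local upon a finite set $S$ if and only if $\rho\mapsto\mathrm{Tr}[\rho B]$ depends only on $\rho|_{S}$. Granting this, take $B=\mathcal{G}^{\dag}A_z\mathcal{G}$ and $S=\mathcal{E}_z$. If $\mathcal{G}$ is causal then $(\mathcal{G}\rho\mathcal{G}^{\dag})|_z$ depends only on $\rho|_{\mathcal{E}_z}$, hence so does $\mathrm{Tr}[(\mathcal{G}\rho\mathcal{G}^{\dag})A_z]=\mathrm{Tr}[\rho B]$ for every $A_z$ local upon $z$; the fact then forces each such $B$ to be local upon $\mathcal{E}_z$, i.e.\ (ii). Conversely, if $\mathcal{G}^{\dag}A_z\mathcal{G}=C\otimes\mathbb{I}_{\overline{\mathcal{E}_z}}$ then $\mathrm{Tr}[(\mathcal{G}\rho\mathcal{G}^{\dag})A_z]=\mathrm{Tr}[\rho|_{\mathcal{E}_z}\,C]$ depends only on $\rho|_{\mathcal{E}_z}$; letting $A_z$ range over all operators local upon $z$ pins down $(\mathcal{G}\rho\mathcal{G}^{\dag})|_z$, which is causality. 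To prove the locality-detection fact itself I would, in the finite-lattice case, check that $\{\rho-\rho':\rho|_S=\rho'|_S\}$ spans the Hilbert--Schmidt orthogonal complement of the operators local upon $S$ --- for instance $\tfrac1{\dim\mathcal{H}_{\mathcal{C}}}\mathbb{I}+\epsilon X$ and $\tfrac1{\dim\mathcal{H}_{\mathcal{C}}}\mathbb{I}$ are states with the same $S$-marginal whenever $X=X^{\dag}$ has vanishing partial trace over $\mathcal{L}\setminus S$ --- so that $\mathrm{Tr}[(\rho-\rho')B]\equiv0$ forces $B=C\otimes\mathbb{I}_{\mathcal{L}\setminus S}$; for the infinite lattice one runs the same argument on a large finite $D\supseteq\mathcal{E}_z$, using density operators that are an arbitrary state on $D$ tensored with the quiescent background, embedding via $\iota_D$ and letting $D$ grow, with finiteness of $\mathcal{E}$ guaranteeing consistency.

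For (ii)$\Leftrightarrow$(iv) I would use two standard algebraic facts about $\mathcal{H}_{\mathcal{C}}$: operators local upon disjoint sets commute, and an operator that commutes with every operator local upon each single cell outside a finite set $S$ is itself local upon $S$. Assume (ii) and let $B_y$ be local upon $y$. For any cell $w\notin y-\mathcal{E}=\mathcal{V}_y$ one has $y\notin w+\mathcal{E}=\mathcal{E}_w$, so $B_y$ and $\mathcal{G}^{\dag}A_w\mathcal{G}$ (local upon $\mathcal{E}_w$ by (ii)) have disjoint supports, whence $[\mathcal{G}B_y\mathcal{G}^{\dag},A_w]=\mathcal{G}\,[B_y,\mathcal{G}^{\dag}A_w\mathcal{G}]\,\mathcal{G}^{\dag}=0$; as $A_w$ ranges over operators local upon cells $w\notin\mathcal{V}_y$, the second fact gives that $\mathcal{G}B_y\mathcal{G}^{\dag}$ is local upon $\mathcal{V}_y$, i.e.\ (iv). The reverse implication is the identical computation with $\mathcal{G}$ and $\mathcal{G}^{\dag}$ interchanged and $\mathcal{E},\mathcal{V}$ swapped, using $\mathcal{E}=-\mathcal{V}$.

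The step I expect to be the real obstacle is the locality-detection fact in the infinite-lattice, $\ell^2$-completion setting: making precise the reduced states $\rho|_{S}$ and the partial trace over the infinite co-region, and rigorously reducing ``local upon $\mathcal{E}_z$'' to a statement on finite blocks $D$. Everything else --- the trace duality, the disjoint-support commutation, and the combinatorial bookkeeping with $\mathcal{E}$ versus $\mathcal{V}$ --- is routine once the finite-dimensional linear algebra (in particular the tensor-factor double-commutant identity) is in place.
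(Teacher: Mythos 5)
The paper does not actually prove this theorem: it states it as due to Arrighi \emph{et al.}~\cite{anw:odqca} and explicitly defers the proof to~\cite{bib:slwqcaqlga}. Your plan is essentially that standard argument --- Schr\"odinger/Heisenberg trace duality for (i)$\Leftrightarrow$(ii), a commutant argument for (ii)$\Leftrightarrow$(iv), and applying the first equivalence to $\mathcal{G}^{\dag}$ for (iii)$\Leftrightarrow$(iv) --- and it is sound; the one genuinely nontrivial ingredient, correctly flagged by you, is establishing on the incomplete infinite tensor product $\mathcal{H}_{\mathcal{C}}$ that locality-detection holds and that the commutant of the operators local on the complement of a finite set $S$ is exactly the image of $\iota_S$, which is precisely the technical content supplied in the cited reference.
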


A QLGA models particles propagating on a lattice and scattering by interaction at the lattice sites. Each cell can be occupied by multiple particles, and each particle has a state which is a vector in a {\it subcell\/} Hilbert  space $W_j$.  Let us say that the internal states of particle $j$, $1 \leq j \leq d$, are  elements of the subcell Hilbert space $W_j$, so the cell Hilbert space is $W = \bigotimes_{j=1}^d W_j$. The quiescent state needed for the infinite configurations basis for an infinite lattice as in Eq.~\eqref{sofc},  $\left| q \right\rangle \in \bigotimes_{j \in \mathcal{E}} W_j$,  is a pure tensor, i.e.,  has the form
\begin{equation*} 
\left| q \right\rangle = \bigotimes_{j \in \mathcal{E}} \left| q_j \right\rangle
\end{equation*}
for some  unit norm $\left| q_j \right\rangle  \in W_j$. 

The basis $\mathcal{B}$ of the cell Hilbert space $W$ is  expressed in terms of some orthonormal bases $\mathcal{B}_j$ of $W_j$, 
\begin{equation*}
\mathcal{B} = \{ \left| b \right\rangle = \bigotimes_{j=1}^{d} \left| b_j \right\rangle :   \left| b_j \right\rangle \in  \mathcal{B}_j\}.
\end{equation*}

In the propagation stage of the evolution, a particle with internal state $j$ (or equivalently, that occupies subcell state $j$) hops to  the corresponding subcell $j$ of a designated neighboring cell that is $e_j \in \mathcal{L}$ away.  Naturally,  this requires the  collection of such neighbors to be  specified by the {\it neighborhood\/} $\mathcal{E} = \{e_1,e_2,\ldots,e_d\}\subset \mathcal{L}$, of cardinality $|\mathcal{E}| = d$. Thus the global evolution is described  by two unitary steps,
 \begin{enumerate}[label=(\roman{*})] 
\item \label{propqlg} {\it Advection},  $\sigma$, that shifts the appropriate subcell state to the corresponding neighbor,
 \begin{equation}   \label{propeqn}
 \left . \begin{array} {cccc}
 \sigma : 
  \bigotimes_{x \in \mathcal{L}}   \bigotimes_{j=1}^{d} \left| b^x_{j} \right\rangle \mapsto  \bigotimes_{x \in \mathcal{L}}  \bigotimes_{j=1}^{d} \left| b^{x+e_j}_{j} \right\rangle,
    \end{array}
  \right.
\end{equation} 
where $\left| b^x_{j} \right\rangle\in  \mathcal{B}_j$, and the cell index $x$ indexes the subcell bases elements $\left| b^x_{j} \right\rangle$.

\item \label{colqlg} {\it Scattering}, $\hat S$, which acts on each cell  by a local unitary scattering map $S: W \longrightarrow W$,
\begin{equation} \label{scateqn}
\hat S:   \bigotimes_{x \in \mathcal{L}}   \bigotimes_{j=1}^{d} \left| b^x_{j} \right\rangle \mapsto   \bigotimes_{x \in \mathcal{L}}   S(\bigotimes_{j=1}^{d} \left| b^x_{j} \right\rangle).
\end{equation}
Note that in the infinite lattice case,  $S$ must fix the quiescent state, i.e., $S (\left| q \right\rangle) = \left| q \right\rangle$.
 \end{enumerate}  
 Each time step,    the current state of  the QLGA is  mapped unitarily to the next by its {\it global evolution} $\mathcal{G}$,
  \begin{equation*}  
\mathcal{G} = \hat S \sigma
\end{equation*} 
The state of a QLGA is an element of $\mathcal{H}_\mathcal{C}$.   It is clear that a QLGA is a QCA. We observe  that the advection $\sigma$ in~\ref{propqlg} is completely specified by the neighborhood $\mathcal{E}$. We denote a QLGA by a pair $(\sigma, S)$.

Let us describe the   criterion from~\cite{bib:slwqcaqlga} for a  QCA to be a QLGA. Denote the image, under $\mathcal{G}$,  of operators localized on a single cell by
\begin{equation*} 
\mathcal{G}_z = \mathcal{G}^\dag \mathcal{A}_z \mathcal{G},
\end{equation*}
  Let us further denote by $\mathcal{D}_{z,x}$ the  following subalgebra of $\mathcal{G}_z$, $z \in \mathbb{Z}^n$.
\begin{equation*} 
\mathcal{D}_{z,x} = \mathcal{G}_z   \cap \mathcal{A}_{x}
\end{equation*}

When $z \in \mathcal{V}_x$ then $\mathcal{D}_{z,x} $  are the elements of $\mathcal{G}_z$ which are contained in $\mathcal{A}_x$, where
\begin{align*} 
\mathcal{G}_z =\mathcal{G}^\dag \mathcal{A}_z\mathcal{G} \subset  \bigotimes_{ k \in \mathcal{E}_z} \begin{rm}{End}\end{rm}(W)  
\end{align*}
and
\begin{align*}
\mathcal{A}_z =  \underbrace{\begin{rm}{End}\end{rm}(W)}_{k = z}   \otimes \bigotimes_{k \in \mathcal{E}_z\setminus \{z\}} \mathbb{I}_{k}.
\end{align*}

For a QCA to be  QLGA, (by Corollary~\ref{corS} below), it is necessary and sufficient that 
\begin{equation} \label{qlgcond}
\mathcal{A}_y = \begin{rm}{span }\end{rm}(\prod_{k \in  \mathcal{E}} \mathcal{D}_{{y-k},y}),
\end{equation}
for all $y \in \mathbb{Z}^n$.

Before proceeding to our construction in the next section, we give a summary of useful results  from~\cite{bib:slwqcaqlga}, describing the embedding of patches  $\mathcal{D}_{{x-y},x}$, originating from neighboring cells, into a cell algebra, when a QCA locally satisfies a  QLGA condition.
\begin{theorem}\label{thmS}
 Suppose that  $\mathcal{G}$ is the  global evolution of a QCA with neighborhood $\mathcal{E}$. 
Then
\begin{equation*}
\mathcal{A}_x = \begin{rm}{span }\end{rm}(\prod_{y \in  \mathcal{E}} \mathcal{D}_{{x-y},x})
\end{equation*}
if and only if there exists an isomorphism of vector spaces,
\begin{equation*}
T :  W \longrightarrow \bigotimes_{j \in \mathcal{E}} W_{j}
\end{equation*}
for some vector spaces $\{W_{j}\}_{j \in \mathcal{E}}$.  Under the isomorphism $T$, for each $y   \in \mathcal{E}$,
\begin{equation*}
\mathcal{D}_{{x-y},x} \cong  \begin{rm}{End}\end{rm}(W_y) \otimes \bigotimes_{j \in \mathcal{E}, j \neq y} \mathbb{I}_{W_j}
\end{equation*}
\end{theorem}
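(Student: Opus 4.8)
The plan is to separate the matrix‑algebra combinatorics from the QCA formalism. Fix $x$ and write $r=|\mathcal{E}|$. First I would record that each $\mathcal{G}_{x-y}=\mathcal{G}^\dagger\mathcal{A}_{x-y}\mathcal{G}$ is a unital $*$-subalgebra of $B(\mathcal{H}_{\mathcal{C}})$: conjugation by the unitary $\mathcal{G}$ is a $*$-isomorphism and $\mathcal{A}_{x-y}\cong\mathrm{End}(W)$ is a unital $*$-algebra. Hence $\mathcal{D}_{x-y,x}=\mathcal{G}_{x-y}\cap\mathcal{A}_x$ is a finite-dimensional unital $*$-subalgebra of $\mathcal{A}_x\cong\mathrm{End}(W)$, and in particular semisimple. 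Moreover, for $y\neq y'$ the cells $x-y$ and $x-y'$ are distinct, so $\mathcal{A}_{x-y}$ and $\mathcal{A}_{x-y'}$ commute elementwise; conjugating by $\mathcal{G}$ preserves commutation, so $\mathcal{G}_{x-y}$ and $\mathcal{G}_{x-y'}$ commute, and therefore the algebras $\mathcal{D}_{x-y,x}$, $y\in\mathcal{E}$, pairwise commute inside $\mathcal{A}_x$. The theorem then reduces to the purely algebraic claim: if $\mathcal{D}_1,\dots,\mathcal{D}_r$ are pairwise-commuting unital $*$-subalgebras of $\mathrm{End}(W)$, then $\mathrm{span}(\mathcal{D}_1\cdots\mathcal{D}_r)=\mathrm{End}(W)$ if and only if there is a vector-space isomorphism $W\cong\bigotimes_j W_j$ carrying each $\mathcal{D}_j$ to $\mathrm{End}(W_j)\otimes\bigotimes_{k\neq j}\mathbb{I}_{W_k}$; the map $T$ of the theorem is this isomorphism composed with the identification $\mathcal{A}_x\cong\mathrm{End}(W)$.

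For the ``if'' direction, under such an isomorphism a product $d_1\cdots d_r$ with $d_j\in\mathcal{D}_j$ is exactly an elementary tensor $a_1\otimes\cdots\otimes a_r$, and these span $\bigotimes_j\mathrm{End}(W_j)=\mathrm{End}(\bigotimes_j W_j)$, so the span is the whole algebra; this is routine. For ``only if'' I would induct on $r$, the base case $r=1$ being trivial. The decisive step is $r=2$. Commutativity gives $\mathcal{D}_2\subseteq\mathcal{D}_1'$, the commutant of $\mathcal{D}_1$ in $\mathrm{End}(W)$, hence also $\mathrm{span}(\mathcal{D}_1\mathcal{D}_1')=\mathrm{End}(W)$. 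By the structure theory of $*$-subalgebras of a matrix algebra (the double commutant theorem) there is a unitary identifying $W\cong\bigoplus_{i=1}^k(U_i\otimes V_i)$ with $\mathcal{D}_1=\bigoplus_i\mathrm{End}(U_i)\otimes\mathbb{I}_{V_i}$ and $\mathcal{D}_1'=\bigoplus_i\mathbb{I}_{U_i}\otimes\mathrm{End}(V_i)$, so $\mathrm{span}(\mathcal{D}_1\mathcal{D}_1')=\bigoplus_i\mathrm{End}(U_i\otimes V_i)$. Comparing dimensions, $\sum_i(\dim U_i\dim V_i)^2=(\sum_i\dim U_i\dim V_i)^2$; since $\sum_i a_i^2\leq(\sum_i a_i)^2$ for nonnegative $a_i$ with equality only when at most one term is nonzero, this forces $k=1$. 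Thus $W\cong U_1\otimes V_1$ and $\mathcal{D}_1\cong\mathrm{End}(U_1)\otimes\mathbb{I}$. Writing $\mathcal{D}_2=\mathbb{I}\otimes\mathcal{N}$ for a unital $*$-subalgebra $\mathcal{N}\subseteq\mathrm{End}(V_1)$ (possible since $\mathcal{D}_2\subseteq\mathcal{D}_1'$), the identity $\mathrm{span}(\mathcal{D}_1\mathcal{D}_2)=\mathrm{End}(U_1)\otimes\mathcal{N}=\mathrm{End}(W)$ forces $\mathcal{N}=\mathrm{End}(V_1)$, so $\mathcal{D}_2\cong\mathbb{I}\otimes\mathrm{End}(V_1)$; set $W_1=U_1$, $W_2=V_1$.

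For $r\geq 3$, put $\mathcal{B}=\mathrm{span}(\mathcal{D}_2\cdots\mathcal{D}_r)$. Pairwise commutativity makes $\mathcal{B}$ a unital $*$-subalgebra that contains each $\mathcal{D}_j$ with $j\geq 2$, commutes with $\mathcal{D}_1$, and satisfies $\mathrm{span}(\mathcal{D}_1\mathcal{B})=\mathrm{End}(W)$. The $r=2$ case yields $W\cong W_1\otimes W'$ with $\mathcal{D}_1\cong\mathrm{End}(W_1)\otimes\mathbb{I}$ and $\mathcal{B}\cong\mathbb{I}_{W_1}\otimes\mathrm{End}(W')$; transporting $\mathcal{D}_2,\dots,\mathcal{D}_r$ across the $*$-isomorphism $\mathcal{B}\cong\mathrm{End}(W')$ gives pairwise-commuting unital $*$-subalgebras of $\mathrm{End}(W')$ whose product spans it, so the inductive hypothesis provides $W'\cong\bigotimes_{j=2}^r W_j$ with $\mathcal{D}_j\cong\mathrm{End}(W_j)\otimes\bigotimes_{k\neq j}\mathbb{I}_{W_k}$; composing with $W\cong W_1\otimes W'$ finishes the induction. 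I expect the main obstacle to be precisely the $r=2$ ``factor'' step — extracting from the double commutant theorem, via the dimension identity above, that two commuting subalgebras whose product fills the matrix algebra must each be simple and mutually commutant. The remaining work is bookkeeping with tensor factorizations and the elementary fact that elementary tensors span a tensor product of matrix algebras.
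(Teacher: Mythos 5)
The paper states Theorem~\ref{thmS} without proof, quoting it from Shakeel and Love~\cite{bib:slwqcaqlga}, so there is no in-paper argument to compare against; measured against the cited source, your proof is correct and follows essentially the same route: the $\mathcal{D}_{x-y,x}$ are pairwise-commuting unital finite-dimensional $*$-subalgebras of $\mathcal{A}_x\cong\mathrm{End}(W)$, and the structure/double-commutant theorem plus the dimension count $\sum_i a_i^2=(\sum_i a_i)^2$ forces a single block and hence a tensor factorization, with the induction on $|\mathcal{E}|$ handling the general case. The only steps worth making explicit are that $\mathrm{span}(\mathcal{D}_2\cdots\mathcal{D}_r)$ is itself a unital $*$-subalgebra (which follows from pairwise commutativity, as you note) and that the containment $\mathcal{G}_{x-y}\subseteq\bigotimes_{k\in\mathcal{E}_{x-y}}\mathrm{End}(W)$ used implicitly comes from structural reversibility (Theorem~\ref{strucreva}); both are fine.
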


The  algebras $\mathcal{A}_x$  and $\mathcal{G}_x$ (the images of algebras $\mathcal{A}_x$ after one timestep of the global evolution $\mathcal{G}$) are very  simply related  to  the patches $\mathcal{D}_{{x-y},x}$ under the conditions  of the above theorem. 

\begin{corollary} \label{corS} Suppose that $\mathcal{G}$ is the  global evolution of a QCA  with neighborhood $\mathcal{E}$, and satisfies   $\mathcal{A}_x = \begin{rm}{span }\end{rm}(\prod_{y \in  \mathcal{E}} \mathcal{D}_{{x-y},x})$. Then
 
\begin{enumerate}[label=(\roman{*})] 
\item \label{cor1}  $\mathcal{A}_x  = \begin{rm}{End}\end{rm}(W)  \cong \bigotimes_{j \in \mathcal{E}} \begin{rm}{End}\end{rm}(W_j)$, for all $x \in \mathbb{Z}^n$.
\item \label{cor2}  The dimension of  $W$, $d_W$, is a product of the dimensions of $W_j$, $d_{W_j}$, i.e.,  $d_W =  \prod_{j \in \mathcal{E}} d_{W_j}$.
\item \label{cor3}  $\mathcal{G}_x = \begin{rm}{span }\end{rm}(\prod_{k \in  \mathcal{E}} \mathcal{D}_{x,{x+k}}) \cong  \bigotimes_{k \in  \mathcal{E}}\begin{rm}{End}\end{rm}(W_{k})$.
\end{enumerate}
\end{corollary}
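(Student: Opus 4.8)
The plan is to extract all three parts from Theorem~\ref{thmS}. Applying that theorem at the cell $x$---using the implication that the span identity $\mathcal{A}_x=\mathrm{span}(\prod_{y\in\mathcal{E}}\mathcal{D}_{x-y,x})$ yields an isomorphism---we obtain a vector-space isomorphism $T\colon W\to\bigotimes_{j\in\mathcal{E}}W_j$ under which each incoming patch $\mathcal{D}_{x-y,x}$ becomes the full matrix algebra $\mathrm{End}(W_y)$ on the $y$-th tensor factor (identity on the others). Part~\ref{cor1} is then immediate: $\mathcal{A}_x$ is by the standing convention identified with $\mathrm{End}(W)$, and $T$ induces the algebra isomorphism $\mathrm{End}(W)\cong\mathrm{End}(\bigotimes_{j\in\mathcal{E}}W_j)\cong\bigotimes_{j\in\mathcal{E}}\mathrm{End}(W_j)$, the last step being the canonical identification for finite-dimensional spaces. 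Part~\ref{cor2} is immediate as well, since a vector-space isomorphism preserves dimension: $d_W=\prod_{j\in\mathcal{E}}d_{W_j}$. By translation invariance the hypothesis, and hence these conclusions, hold verbatim at every cell.

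The content is Part~\ref{cor3}, which I would obtain by a dimension count. Fix $k\in\mathcal{E}$ and set $x':=x+k$. The outgoing patch $\mathcal{D}_{x,x+k}=\mathcal{G}_x\cap\mathcal{A}_{x+k}$ is, read from the target cell $x'$, literally the incoming patch $\mathcal{D}_{x'-k,x'}$ at $x'$; since the hypothesis holds at $x'$ too, Theorem~\ref{thmS} applied at $x'$ (with $y=k$) gives $\mathcal{D}_{x,x+k}\cong\mathrm{End}(W_k)$ as a full matrix algebra on one factor of the cell-$x'$ decomposition, so $\dim\mathcal{D}_{x,x+k}=d_{W_k}^2$ (translation invariance matching the dimension data across cells). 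The algebras $\{\mathcal{D}_{x,x+k}\}_{k\in\mathcal{E}}$ are supported on the pairwise distinct cells $\mathcal{E}_x=\{x+k:k\in\mathcal{E}\}$, hence pairwise commute, and subalgebras supported on distinct cells generate their tensor product, so $\mathrm{span}(\prod_{k\in\mathcal{E}}\mathcal{D}_{x,x+k})$ is their external tensor product, of dimension $\prod_{k\in\mathcal{E}}d_{W_k}^2=d_W^2$ by Part~\ref{cor2}. On the other hand every $\mathcal{D}_{x,x+k}\subseteq\mathcal{G}_x$ by definition, and $\mathcal{G}_x=\mathcal{G}^\dagger\mathcal{A}_x\mathcal{G}$ is an algebra, so $\mathrm{span}(\prod_{k\in\mathcal{E}}\mathcal{D}_{x,x+k})\subseteq\mathcal{G}_x$; since $\dim\mathcal{G}_x=\dim\mathcal{A}_x=d_W^2$ equals the dimension just computed, this inclusion is an equality, and as the product is a tensor product of the full matrix algebras $\mathcal{D}_{x,x+k}\cong\mathrm{End}(W_k)$ we conclude $\mathcal{G}_x=\mathrm{span}(\prod_{k\in\mathcal{E}}\mathcal{D}_{x,x+k})\cong\bigotimes_{k\in\mathcal{E}}\mathrm{End}(W_k)$.

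I expect the main obstacle to be the single structural observation that unlocks Part~\ref{cor3}: recognizing that an outgoing patch $\mathcal{D}_{x,x+k}$ from cell $x$ is literally an incoming patch at the neighbouring cell $x+k$, which is what licenses invoking Theorem~\ref{thmS} there to pin down its matrix-algebra structure and dimension. The remaining ingredients---that commuting subalgebras supported on distinct cells span a tensor product whose dimension is the product of the factor dimensions, and that translation invariance matches $d_{W_k}$ across cells---are routine once isolated, and the dimension count then forces all the claimed equalities.
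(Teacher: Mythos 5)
Your proposal is correct. The paper itself gives no proof of Corollary~\ref{corS} --- it is quoted as a summary of results from~\cite{bib:slwqcaqlga} --- so there is nothing in the text to compare against line by line, but your derivation from Theorem~\ref{thmS} is sound and is essentially the intended one: parts~\ref{cor1} and~\ref{cor2} read off the isomorphism $T$, and part~\ref{cor3} follows from the observation that $\mathcal{D}_{x,x+k}=\mathcal{D}_{(x+k)-k,\,x+k}$ is an incoming patch at the translated cell (where the hypothesis again holds by translation invariance), combined with the dimension count $\prod_{k}d_{W_k}^{2}=d_W^{2}=\dim\mathcal{G}_x$ that forces the inclusion $\mathrm{span}(\prod_{k\in\mathcal{E}}\mathcal{D}_{x,x+k})\subseteq\mathcal{G}_x$ to be an equality.
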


For completeness, we also include the structure theorem  that globally characterizes a QCA as a QLGA.
\begin{theorem}\label{thmStructure}
$\mathcal{G}$ is the  global evolution of a QCA  on the Hilbert space of finite configurations $\mathcal{H}_{\mathcal{ C}}$, with  neighborhood $\mathcal{E}$,  and satisfies
 $\mathcal{A}_x = \begin{rm}{span }\end{rm}(\prod_{y \in  \mathcal{E}} \mathcal{D}_{{x-y},x})$,  if and only if
\begin{enumerate}[label=(\roman{*})]
\item \label{thmitem1}  there exists  an  isomorphism of vector spaces $T$,
\begin{equation*}
T :  W \longrightarrow \bigotimes_{j \in \mathcal{E}} W_{j}
\end{equation*}
for some vector spaces $\{W_{j}\}_{j \in \mathcal{E}}$.   Under the isomorphism $T$, for each $y   \in \mathcal{E}$,
\begin{equation*}
\mathcal{D}_{{x-y},x} \cong  \begin{rm}{End}\end{rm}(W_y) \otimes \bigotimes_{j \in \mathcal{E}, j \neq y} \mathbb{I}_{W_j}
\end{equation*}
Furthermore, $\hat W =  \bigotimes_{j \in \mathcal{E}} W_{j}$ can be given  an inner product such that $T$ is an inner product preserving isomorphism of Hilbert spaces.
\item  \label{thmitem2}  $\mathcal{G}$ is given by
\begin{equation*}
\mathcal{G}   \cong       {\hat S} \sigma, 
\end{equation*}
where   $\sigma$ is as in~\eqref{propeqn}, and $\hat S$ is as in~\eqref{scateqn} in terms of a unitary map $S$ on $\bigotimes_{j \in \mathcal{E}} W_{j}$.
\item \label{thmitem3} (For infinite lattice) $\left| \tilde q \right\rangle = T(\left| q \right\rangle)$ is a pure tensor, i.e.,  $\left| \tilde q \right\rangle = \bigotimes_{j \in \mathcal{E}} \left| \tilde q_j \right\rangle$ for some  unit norm $\left| \tilde q_j \right\rangle \in  W_{j}$, and $S$ in the definition of $ \hat S$ fixes $\left| \tilde q \right\rangle$: $S \left| \tilde q \right\rangle =  \left| \tilde q \right\rangle$.
\end{enumerate}
\end{theorem}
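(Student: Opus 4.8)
The plan is to prove the two implications separately, with the forward direction --- that $\mathcal{G}$ being the global evolution of a QCA satisfying $\mathcal{A}_x=\mathrm{span}(\prod_{y\in\mathcal{E}}\mathcal{D}_{x-y,x})$ implies (i)--(iii) --- carrying essentially all of the content, and the converse being a direct verification. For (i) I would invoke Theorem~\ref{thmS} verbatim to produce the vector-space isomorphism $T$ and the identifications $\mathcal{D}_{x-y,x}\cong\mathrm{End}(W_y)\otimes\bigotimes_{j\neq y}\mathbb{I}_{W_j}$. To upgrade $T$ to an inner-product-preserving isomorphism I would note that each $\mathcal{D}_{x-y,x}=\mathcal{G}_{x-y}\cap\mathcal{A}_x$ is a $*$-subalgebra of the matrix algebra $\mathcal{A}_x$ (an intersection of the $*$-algebras $\mathcal{G}^\dagger\mathcal{A}_{x-y}\mathcal{G}$ and $\mathcal{A}_x$), that the $\mathcal{D}_{x-y,x}$ for distinct $y$ mutually commute (their ambient algebras $\mathcal{G}_{x-y}$ do, being $\mathrm{Ad}(\mathcal{G})$-images of commuting cell algebras), and that by hypothesis they generate $\mathcal{A}_x\cong\mathrm{End}(W)$; the structure theorem for finite-dimensional $C^*$-algebras then supplies an orthogonal tensor factorization $W\cong\bigotimes_{y\in\mathcal{E}}W_y$ compatible with these subalgebras, so the tensor-product inner product on $\hat W$ makes $T$ unitary.

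For (ii) I would take as candidate advection exactly the $\sigma$ of~\eqref{propeqn} built from the minimal neighborhood $\mathcal{E}$, labelling the subcells of every cell by $\mathcal{E}$ so that $\mathcal{D}_{x-y,x}$ is the ``subcell-$y$'' factor of cell $x$ (consistently, by Corollary~\ref{corS}\ref{cor3}, $\mathcal{G}_z=\bigotimes_{y\in\mathcal{E}}\mathcal{D}_{z,z+y}$ with $\mathcal{D}_{z,z+y}$ the subcell-$y$ factor of cell $z+y$). One then checks that conjugation by $U:=\mathcal{G}\sigma^{-1}$ carries each single-cell algebra $\mathcal{A}_z$ into itself: $\mathrm{Ad}(\mathcal{G})$ spreads $\mathcal{A}_z$ onto the subcell-$y$ slots of the cells $z+y$ ($y\in\mathcal{E}$), and $\mathrm{Ad}(\sigma^{-1})$ transports each such slot back by $-y$, recombining them on cell $z$. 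Thus $U$ is unitary, translation invariant (both $\mathcal{G}$ and $\sigma$ are), and causal with respect to the trivial neighborhood $\{0\}$; by Theorem~\ref{strucreva} applied to $U$ one gets $U\mathcal{A}_zU^\dagger=\mathcal{A}_z$ for every $z$. Since each $\mathcal{A}_z$ is a full matrix algebra, Skolem--Noether gives $\mathrm{Ad}(U)|_{\mathcal{A}_z}=\mathrm{Ad}(u_z)$ for unitaries $u_z\in\mathcal{A}_z$; hence $U\,(\bigotimes_z u_z)^{-1}$ is central, so $U=e^{i\phi}\bigotimes_z u_z$, and translation invariance lets one take all $u_z$ equal to a single unitary $S$, absorbing the phase. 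Transporting everything by $T$ at each cell then yields $\mathcal{G}\cong\hat S\sigma$ in the form~\eqref{scateqn},~\eqref{propeqn}.

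For (iii) and the infinite-lattice bookkeeping: on a finite lattice, (ii) already finishes the forward direction, and the converse is the computation below. On the infinite lattice, $\mathcal{G}$ preserving $\mathcal{H}_{\mathcal C}$ forces $\sigma$ and $\hat S$ individually to do so. The map $\sigma$ shuffles the $\mathcal{E}$-indexed subcells independently across cells, so it sends finite configurations to finite configurations precisely when the transported quiescent state $|\tilde q\rangle=T|q\rangle$ is a pure tensor $\bigotimes_{j\in\mathcal{E}}|\tilde q_j\rangle$ (otherwise shifting the subcells of the quiescent background produces infinitely many non-quiescent cells); and $\hat S=\bigotimes_x S$ carries $\mathcal{H}_{\mathcal C}$ into itself, in particular the vacuum, only if $S|\tilde q\rangle=|\tilde q\rangle$ (an otherwise nontrivial infinite product of phases being ill defined). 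This establishes (iii). Conversely, given (i)--(iii), $\hat S\sigma$ is a well-defined unitary on $\mathcal{H}_{\mathcal C}$, manifestly translation invariant and causal relative to $\mathcal{E}$, hence a QCA; and computing $\mathcal{D}_{x-y,x}$ for $\mathcal{G}=\hat S\sigma$ --- using that $\hat S$ acts on cell $x-y$ as an inner automorphism of $\mathrm{End}(W)$ and that $\mathrm{Ad}(\sigma)$ carries the subcell-$y$ factor of cell $x-y$ onto cell $x$ --- gives $\mathcal{D}_{x-y,x}\cong\mathrm{End}(W_y)$ inside cell $x$, whence $\mathrm{span}(\prod_{y\in\mathcal{E}}\mathcal{D}_{x-y,x})=\bigotimes_{y\in\mathcal{E}}\mathrm{End}(W_y)=\mathrm{End}(W)=\mathcal{A}_x$.

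The main obstacle I anticipate is twofold and lives entirely in the forward, infinite-lattice direction. First, the passage from ``$U$ normalizes every $\mathcal{A}_z$'' to ``$U=\hat S$ for a single cell unitary $S$'' must be carried out through the finite-dimensional local subalgebras rather than a literal infinite tensor product, and the residual global phase must be pinned down (e.g.\ by the action of $\mathcal{G}$ on the vacuum). Second, and conceptually more important, is showing that the quiescent state is forced to be a pure tensor fixed by $S$: this is exactly the point at which the existence of QCA that are unitary, translation invariant and causal yet fail the algebraic condition becomes visible, since for such a $\mathcal{G}$ the candidate advection $\sigma$ need not even be definable on $\mathcal{H}_{\mathcal C}$. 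Everything else --- the two local-structure inputs (Theorem~\ref{thmS}, Corollary~\ref{corS}), the Skolem--Noether step, and the converse verification --- is routine.
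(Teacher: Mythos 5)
A preliminary point about the comparison itself: the paper does not prove Theorem~\ref{thmStructure}. It is included ``for completeness'' and attributed to Shakeel and Love~\cite{bib:slwqcaqlga}, so there is no in-paper proof to measure your outline against; I can only judge whether your plan would work. In broad strokes it would, and it follows the natural (and, as far as the strategy goes, the cited source's) route: Theorem~\ref{thmS} for the tensor factorization; the observation that the $\mathcal{D}_{x-y,x}$ are mutually commuting $*$-subalgebras, each a full matrix factor, jointly generating $\mathcal{A}_x$, so that the factorization of $W$ can be taken unitary; the splitting $\mathcal{G}=U\sigma$ with $U$ causal for the trivial neighborhood; Skolem--Noether plus translation invariance to write $U$ as a cell-wise $\hat S$; and a direct computation of $\mathcal{D}_{x-y,x}$ for the converse. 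The converse verification is correct as sketched.

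The one place the argument as written does not close is the infinite-lattice forward direction, and it is not mere bookkeeping: it is circular. You introduce $U=\mathcal{G}\sigma^{-1}$ in step (ii) and then, in step (iii), deduce that $|\tilde q\rangle=T|q\rangle$ is a pure tensor because ``$\mathcal{G}$ preserving $\mathcal{H}_{\mathcal{C}}$ forces $\sigma$ and $\hat S$ individually to do so.'' But, as you yourself note in your closing paragraph, $\sigma$ is not a well-defined operator on $\mathcal{H}_{\mathcal{C}}$ unless $|\tilde q\rangle$ is \emph{already} a pure tensor: advecting a non-product quiescent background leaves the span of finite configurations. So purity must be established before $\sigma$, hence before $U$ and the factorization $\mathcal{G}=\hat S\sigma$, can be written down; it cannot be read off from that factorization afterwards. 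The repair is to derive purity directly from the local data: $\mathcal{G}$ maps $\mathcal{H}_{\mathcal{C}}$ to itself and by translation invariance fixes the all-quiescent vector up to phase, and comparing vacuum expectation values of elements of the commuting factors $\mathcal{D}_{x-y,x}\cong\mathrm{End}(W_y)$ before and after conjugation by $\mathcal{G}$ forces each single-factor reduced density operator of $|\tilde q\rangle$ to be rank one. The same ordering issue touches your Skolem--Noether step: the infinite product $\bigotimes_z u_z$ is an operator on $\mathcal{H}_{\mathcal{C}}$ only after the phases of the $u_z$ are aligned so that $u_z$ fixes $|\tilde q\rangle$, which again presupposes (iii). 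You correctly identify this as the main obstacle and have the right intuition for it, but the proposal leaves the implications in an order in which they cannot be discharged.
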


\section{A class of QCA that are not  QLGA} \label{sec:catqlga}
We illustrate,  with a simple example,      a class of QCA without particles. This QCA is    composed of two QLGA, and is shown in Fig.~\ref{figqbg1}.

\begin{figure}[ht]
\includegraphics[
natheight=4.315400in, 
natwidth=5.554700in, 
height=2.2in, 
width=2.4in
]
{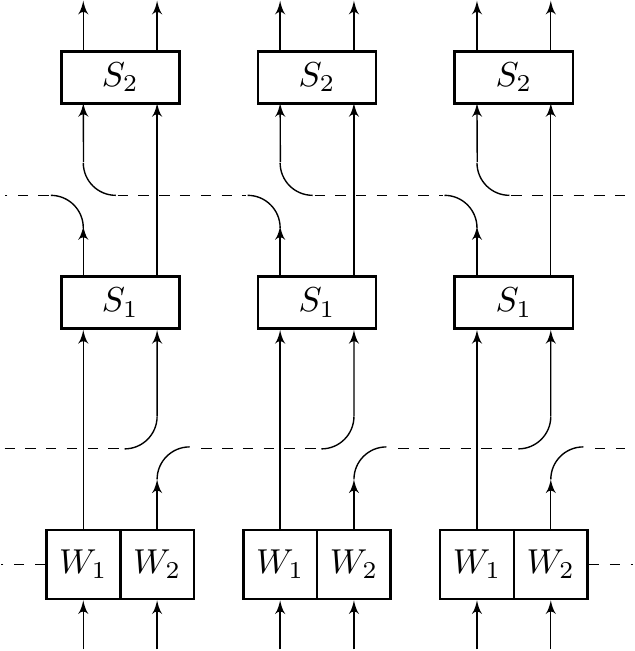}
\caption{A QCA which is not a  QLGA}
\label{figqbg1}
\end{figure}
The global evolution is
 \begin{equation}  \label{ge1}
\mathcal{G} = \hat S_2 \sigma_2  \hat S_1 \sigma_1.
\end{equation} 
Here, assume $W_1= W_2 = \mathbb{C}^2$, $\sigma_1$ is described by the neighborhood $\mathcal{E}_1=\{e_1 = 0, e_2= -1\}$ and $\sigma_2$ by the neighborhood $\mathcal{E}_2=\{e_1 = +1, e_2= 0\}$.  For the infinite lattice case, we take  the quiescent state to be  $\left| q \right\rangle = \left| 00 \right\rangle$.

When either  $S_1$ or $S_2$  is identity, this describes a QLGA. Indeed, if $S_1 = \mathbb{I}$, 
   \begin{equation*}  
\mathcal{G} = \hat S_2 \sigma_2 \sigma_1.
\end{equation*} 
For  the composite  advection $\sigma_2 \sigma_1$,  the neighborhood is $\mathcal{E}=\{e_1 = +1, e_2= -1\}$. If  $S_2 = \mathbb{I}$, 
   \begin{equation}   \label{s2id}
\mathcal{G} = \sigma_2 \hat S_1  \sigma_1.
\end{equation} 
Observe that  under the unitary isomorphism $\sigma^{-1}_2$ of $\mathcal{H}_\mathcal{C}$,
\begin{equation} \label{reperm}
\sigma^{-1}_2:  \mathcal{H}_\mathcal{C} \longrightarrow  \mathcal{H}_\mathcal{C},
\end{equation}
  we can write the global evolution as
   \begin{equation*}  
\tilde{\mathcal{G}} =   \sigma^{-1}_2  \mathcal{G} \sigma^{\vphantom{-1}}_2 = \hat S^{\vphantom{-1}}_1  \sigma^{\vphantom{-1}}_1  \sigma^{\vphantom{-1}}_2.
\end{equation*} 
This shows that $\mathcal{G}$  is equivalent to a QLGA.

We can   describe this  in a more intuitive  way, by writing the global evolution in~\eqref{s2id} as
   \begin{equation*}  
\mathcal{G}  =  \sigma^{\vphantom{-1}}_2 \hat S^{\vphantom{-1}}_1 \sigma^{-1}_2 \sigma^{\vphantom{-1}}_2 \sigma^{\vphantom{-1}}_1.
\end{equation*} 
The natural way to deal with this is to   adjust   the cell definition  to accommodate the new scattering $ \sigma^{\vphantom{-1}}_2 \hat S^{\vphantom{-1}}_1 \sigma^{-1}_2$. Using  the   tensor factor indexing as for the ``standard" cell, we can make the scattering local by redefining a  new cell  as the grouping of tensor factors from adjacent standard cells as follows. The cell at position $y$ is
   \begin{equation}   \label{newcell}
W_1^{y-1} \otimes W_2^y.
\end{equation} 
This new constructed cell is shown  in Fig.~\ref{figqbg2}, by shading the corresponding subcells in a matched way. Under this cell definition, the neighborhood for $\sigma_2 \sigma_1$ is  $\mathcal{E}=\{e_1 = +1, e_2= -1\}$. 

\begin{figure}[ht] 
\includegraphics[
natheight=4.315400in, 
natwidth=5.554700in, 
height=1.8in, 
width=3.3in
]
{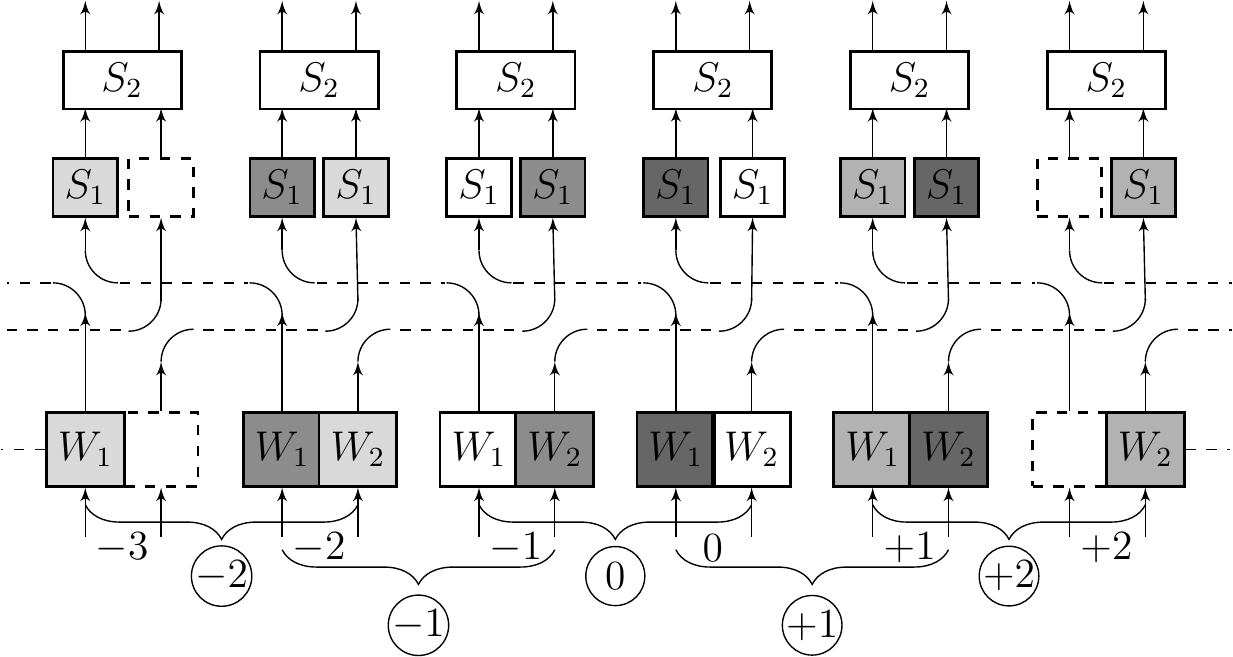}
\caption{QCA of Fig.~\ref{figqbg1} as written in Eq.~\eqref{soqca}. 
The ``regular'' cell numbers are shown without circles under each qubit 
pair. Also the cells constructed as in Eq.~\eqref{newcell} are shown as 
split across the regular cells, and their qubits shaded in a matched way, 
labeled by circled numbers.  
$\sigma_2 \hat S_1 \sigma^{-1}_2$ 
is depicted as pairs of smaller $S_1$ squares shaded corresponding to the new cell construction, where each pair of  the same shade is a single $S_1$.}
\label{figqbg2}
\end{figure}

We  define  the construction of a new  cell in a more general fashion. A cell can either be of the same size or an integer multiple of the size of the standard cell; this is dictated by the requirement that  the  evolution stay translation invariant.  A new cell of the same size as the standard  can only be defined through an advection as in the example above in~\eqref{newcell};  a set of such new cells  can be grouped to obtain a larger cell.

\begin{definition} Given a (current) cell Hilbert space $W = \bigotimes_{j=1}^d W_j$ and a set of finite configurations $\mathcal{C}$, a (new) {\it cell construction\/} is given by a neighborhood $\mathcal{E}_c = \{c_1,\ldots, c_d\}\subset\mathbb{Z}^n$ (or equivalently an advection $\sigma_c$ corresponding to $\mathcal{E}_c$) and a positive integer $m$.  Let us  represent the  current cell $x$ as a collection of pairs of indices
\begin{equation*}
 \{(x,j)\}_{j=1}^{d},
\end{equation*}
where $x$ is the cell index and $j$ is the subcell index. The new single cell $x$ is  obtained by advection to get
\begin{equation*}
 \{(x+c_j,j)\}_{j=1}^d.
\end{equation*}
A new larger cell $y$   is obtained by  grouping  $m$ such cells
\begin{equation*} 
\bigg\{ (my+c_j,j), \ldots, \bigl(m(y+1)-1+c_j,j\bigr)\bigg\}_{j=1}^d.
\end{equation*}
 In the subcell terminology, the constructed cell is 
 \begin{equation} \label{scxjindex}
 \bigotimes_{x=my}^{m(y+1)-1} \bigotimes_{j=1}^{d} W^{x+c_j}_j.
 \end{equation}

\end{definition}
We note that in~\eqref{reperm}, $\sigma_c =  \sigma^{-1}_2$ and $m=1$. 

Next  consider  the case when  neither  $S_1$ nor $S_2$  is identity. We can write 
 \begin{equation}  \label{soqca}
\mathcal{G} = \hat S^{\vphantom{-1}}_2 \sigma^{\vphantom{-1}}_2 \hat S^{\vphantom{-1}}_1 \sigma^{-1}_2 \sigma^{\vphantom{-1}}_2 \sigma^{\vphantom{-1}}_1.
\end{equation} 
This form of evolution is displayed in Fig.~\ref{figqbg2}.  
$S_1$, which now shows up as  $\sigma^{\vphantom{-1}}_2 \hat S^{\vphantom{-1}}_1 \sigma^{-1}_2$, is ``spread out''  by $\sigma_2$  over adjacent standard cells. 
However, there is no  guarantee that the  ``scattering,''
 \begin{equation*}  
\hat S^{\vphantom{-1}}_2 \sigma^{\vphantom{-1}}_2 \hat S^{\vphantom{-1}}_1 \sigma^{-1}_2,
\end{equation*}  
is  locally describable.  That is, given a pair  $S_1, S_2$, one cannot always construct  a cell  such that  $\mathcal{G}$ above can be written  as a QLGA.

\begin{theorem}
For the QCA given by \eqref{ge1}, there exist $S_1$ and $S_2$ such that the global evolution $\mathcal{G}$ in Eq.~\eqref{ge1}
is not equivalent to a QLGA for any cell construction.
\end{theorem}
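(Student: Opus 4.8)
The plan is to produce explicit Clifford scattering maps and then kill every cell construction at once using the QLGA criterion of Corollary~\ref{corS}. Concretely, write $X_j^x,Z_j^x$ for the Pauli operators on subcell $j$ of cell $x$ ($W_1=W_2=\mathbb C^2$), and take $S_1=S_2$ equal to the controlled-$Z$ gate between the two subcells of a cell. Then $\mathcal G=\hat S_2\sigma_2\hat S_1\sigma_1$ is a Clifford QCA, and propagating Paulis through the four stages gives $\mathcal G^\dagger Z_1^0\mathcal G=Z_1^1$, $\mathcal G^\dagger X_1^0\mathcal G=X_1^1Z_2^0Z_2^{-1}$, $\mathcal G^\dagger Z_2^0\mathcal G=Z_2^{-1}$, $\mathcal G^\dagger X_2^0\mathcal G=X_2^{-1}Z_1^0Z_1^1$, so $\mathcal G$ has neighborhood $\mathcal E=\{-1,0,1\}$. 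Next I would fix an arbitrary cell construction, i.e. an advection $\sigma_c$ of the two subcell chains together with a grouping of $m$ consecutive standard cells, and compute the transformed evolution $\tilde{\mathcal G}=\sigma_c^{-1}\mathcal G\sigma_c$ on the $m$-grouped lattice. Since $\sigma_c$ merely relabels subcells and $\mathcal G^\dagger\mathrm{End}(W_j^s)\mathcal G$ is known from the formulas above, the algebra $\mathcal M:=\tilde{\mathcal G}^\dagger\tilde{\mathcal A}_0\tilde{\mathcal G}$ is computed exactly: it is a tensor product of $2m$ copies of $M_2(\mathbb C)$, with explicit Pauli generators $Z_1^{i+1}$ and $X_1^{i+1}Z_2^{c+i}Z_2^{c+i-1}$ (subcell-$1$ factors) and $Z_2^{j-1}$ and $X_2^{j-1}Z_1^{j-c}Z_1^{j+1-c}$ (subcell-$2$ factors), $0\le i,j\le m-1$, up to the orientation conventions for $\sigma_1,\sigma_2$.

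By Corollary~\ref{corS} (part (iii)), if $\tilde{\mathcal G}$ were a QLGA then $\mathcal M=\mathrm{span}(\prod_k(\mathcal M\cap\tilde{\mathcal A}_k))$, the span of products of the pieces of $\mathcal M$ supported on individual new cells. Passing to $\mathbb F_2$-coordinates on the Pauli group, this becomes $V=\sum_k(V\cap U_k)$, where $V$ is the $4m$-dimensional symplectic space spanned by the $4m$ Pauli generators of $\mathcal M$ and $U_k$ is the coordinate subspace of Paulis supported on new cell $k$; since the $U_k$ are pairwise complementary the sum is direct, so the criterion is the equality $\sum_k\dim(V\cap U_k)=4m$. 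So the theorem reduces to showing this fails for every pair $(\sigma_c,m)$. Here I would first observe that conjugating $\tilde{\mathcal G}$ by a shift of the subcell-$1$ chain through $m$ sites is itself an admissible (advection-only) cell construction of the $m$-grouped lattice, so one may assume $0\le c\le m-1$, leaving only finitely many residues to treat for each $m$.

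For the remaining residues I would use an explicit witness: the generator $h=X_1^{m}Z_2^{c+m-1}Z_2^{c+m-2}$ when $c\in\{0,1\}$, and $h=X_1^{m-c+1}Z_2^{m}Z_2^{m-1}$ when $c\ge 2$, has support straddling the boundary between new cells $0$ and $1$, so its projection $\pi_0(h)$ onto $U_0$ is a proper sub-Pauli of it. The key point is that on each subcell $1^s$ with $1\le s\le m$ the Pauli-$X$ content of a generator of $V$ comes from a \emph{unique} generator, namely $X_1^{s}Z_2^{c+s-1}Z_2^{c+s-2}$; hence any element of $V$ carrying Pauli-$X$ on the relevant subcell must contain that whole generator, and its $Z_2$-tail then cannot be truncated inside $V$, because propagating the cancellation rightward forces a new, uncancellable Pauli-$X$ to appear. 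Therefore $\pi_0(h)\notin V$, so $h\notin\sum_k(V\cap U_k)$ and the QLGA equality fails; combined with the reduction above this shows $\tilde{\mathcal G}$ is not a QLGA for any $(\sigma_c,m)$, i.e. $\mathcal G$ is not equivalent to a QLGA under any cell construction. I expect the main obstacle to be exactly this last step: making the witness/uniqueness argument genuinely uniform in $m$ and in the residue $c\bmod m$ — in particular verifying that for every residue some $h$-type generator is boundary-straddling and that the Pauli-$X$ source really is unique so that the $Z_2$-tail obstruction applies. The earlier steps (the Pauli bookkeeping for $\mathcal M$ and its translation into $\mathbb F_2$-linear algebra) are routine, and one could alternatively argue that "QLGA-izable by some cell construction" is, for each fixed $m$, a proper algebraic condition on $(S_1,S_2)$, so that a generic Clifford (or unitary) choice works, using the controlled-$Z$ computation above as the explicit point off the bad locus.
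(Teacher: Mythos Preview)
Your approach is valid and genuinely different from the paper's. The Pauli bookkeeping is correct, and the passage to $\mathbb{F}_2$ is justified: since distinct Pauli words are linearly independent in the full algebra, an element of the Pauli-generated algebra $\mathcal{M}$ lies in $\tilde{\mathcal{A}}_k$ iff every Pauli in its expansion does, so $\mathcal{M}\cap\tilde{\mathcal{A}}_k$ really corresponds to $V\cap U_k$. The paper, by contrast, chooses a \emph{non}-Clifford $S$ satisfying the strong ``no product preservation'' property $S^\dagger(\mathbb{I}\otimes A)S=B\otimes C\Rightarrow A=B=C=\mathbb{I}$ (Eq.~\eqref{pr2nonpr}). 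That property alone forces $\mathcal{D}_{y\pm1,y}$ to collapse to scalars for the standard $m$-grouping, giving the violation of \eqref{qlgcond} in one stroke. Your $S=\mathrm{CZ}$ does \emph{not} have this property---it commutes with every diagonal operator, so for instance $\mathrm{CZ}^\dagger(\mathbb{I}\otimes Z)\mathrm{CZ}=\mathbb{I}\otimes Z$---which is exactly why you are forced into the more delicate witness argument. What you gain is a fully explicit, finitely-described $V$ amenable to uniform combinatorics; what the paper gains is that its key step is essentially trivial once the right $S$ is in hand.

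There is, however, a gap beyond the one you flag. Your reduction to $0\le c\le m-1$ does not go through: the partitions of the subcell lattice induced by $(0,c,m)$ and by $(0,c+m,m)$ are \emph{genuinely different} (cell $0$ contains $2^c,\dots,2^{c+m-1}$ in the first and $2^{c+m},\dots,2^{c+2m-1}$ in the second, and no lattice translation carries one to the other). Your observation that shifting the subcell-$1$ chain by $m$ is an advection on the $m$-grouped lattice only shows that $(0,c+m,m)$ is obtained from $(0,c,m)$ by a \emph{further} cell construction; but further cell constructions can turn a non-QLGA into a QLGA, so this does not let you transfer failure of the criterion from one to the other. You must therefore treat all $c\in\mathbb{Z}$ for each $m$. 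For $|c|$ large the neighborhood of the regrouped QCA grows (the subcell-$2$ support $\{-1,\dots,m-1\}$ of $V$ lands in roughly $|c|/m$ distinct new cells), so eventually Corollary~\ref{corS}\ref{cor2} disposes of it by a dimension count; but the cutoff is of order $m^2$, not $m$, so for each $m$ an $m$-dependent finite range of $c$ still needs the witness argument, and you already note that making that argument uniform is the crux. The paper's choice of $S$ sidesteps both issues simultaneously, at the cost of a less algorithmic verification.
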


\begin{proof}  
Let us first rule out the cell configurations that worked in the two cases above. For the standard cell configuration (the one that worked for $S_1=\mathbb{I}$), a choice of $S_1$ and $S_2$ can be made  to make the neighborhood  $\mathcal{E} = \{e_1 = -1, e_2 = 0, e_3 = +1\}$. For instance, we can take $S_1=S_2=S$, the symmetric scattering matrix, 
\begin{equation} \label{eq:S1}
S=\begin{pmatrix}
1&0&0&0\\
0&1/\sqrt{2}&i/\sqrt{2}&0\\
0&i/\sqrt{2}&1/\sqrt{2}&0\\
0&0&0&i\\
\end{pmatrix}.
\end{equation}
 By the criterion in~\cite{bib:slwqcaqlga} [Corollary~\ref{corS}~\ref{cor2}], the cell Hilbert space dimension, in this case $4$, must have $\left\lvert \mathcal{E} \right\rvert= r = 3$ factors to be a QLGA. This condition is clearly not met. For the case of   the cell configuration $W_1^{y-1} \otimes W_2^y$ (the one that worked when $S_2=\mathbb{I}$), the neighborhood is strictly $\mathcal{E} = \{e_1 = -2, e_2 = -1, e_3 = +1, e_4 = +2\}$, yielding  $\left\lvert \mathcal{E} \right\rvert= r = 4$. This runs into the same problem as before, as the cell Hilbert space  dimension is again $4$. Hence these cell configurations cannot correspond to QLGA. In fact any cell  construction based on single-cell ($m=1$) has this problem.

Next we consider more general cell constructions for this example. Take the cell to be $m \geq 2$ standard adjacent cells (the one that works for $S_1 =\mathbb{I}$). Then the neighborhood is $\mathcal{E} = \{e_1 = -1, e_2 = 0, e_3 = +1\}$.  $\left\lvert \mathcal{E} \right\rvert= r = 3$, and  the cell Hilbert space dimension is $2^{2m}$, so this is  in agreement with Corollary~\ref{corS}~\ref{cor2}. We need to resort to the  criterion in Eq.~\eqref{qlgcond} to show that the  assertion  still holds. Notice that $S$ in Eq.~\eqref{eq:S1} has the property that, for  $A,B,C \in M_2(\mathbb{C})$ (the algebra of $2 \times 2$ complex matrices),
\begin{equation} \label{pr2nonpr}
S^\dag (\mathbb{I} \otimes A) S = B \otimes C \iff A=B=C=\mathbb{I}.
\end{equation}
By the  symmetry of $S$, this is also true if $\mathbb{I} \otimes A$ is replaced by $A\otimes\mathbb{I}$ in the above equation.

 In the following, the subscripts and superscripts in the tensor factor indexing are as in Eq.~\eqref{scxjindex}. The property above implies, that up to conjugation by the 
local  unitary  $\bigotimes^m S_2 = \bigotimes^m S$ ({\it local\/} relative to cells, i.e., acting by a cell-wise product of unitary transformations), we have that
\begin{alignat}{3} \label{qlgstm}
&\mathcal{D}_{{y-1},y} && =  &&\bigotimes_{x=my}^{m(y+1)-1} \bigotimes_{j=1}^{2}  \mathbb{I}^{x}_j,  \nonumber \\
&\mathcal{D}_{{y},y} && =   &&\phantom{m}M_2(\mathbb{C})^{my}_1 \otimes \mathbb{I}^{my}_2 \otimes  \bigg(\bigotimes_{x=my+1}^{m(y+1)-2} \bigotimes_{j=1}^{2}  M_2(\mathbb{C})^{x}_j \bigg) 
   \nonumber  \\
& && &&\phantom{m}  \otimes \mathbb{I}^{m(y+1)-1}_1 \otimes M_2(\mathbb{C})^{m(y+1)-1}_2,   \nonumber \\
&\mathcal{D}_{{y+1},y} && =  &&\bigotimes_{x=my}^{m(y+1)-1} \bigotimes_{j=1}^{2}  \mathbb{I}^{x}_j.    \nonumber \\
\end{alignat}
This implies
\begin{equation} \label{AsupD}
\mathcal{A}_y \supsetneq \begin{rm}{span }\end{rm}(\prod_{k \in  \mathcal{E}} \mathcal{D}_{{y-k},y}).
\end{equation}
Thus this cell structure is not compatible with a QLGA.
\begin{remark} 
To attain~\eqref{AsupD}, the hypothesis on  $S$ (when $S_1=S_2=S$)  is weaker than the property in~\eqref{pr2nonpr}. Even so, the authors think $S \in M_2(\mathbb{C})$ satisfying ~\eqref{pr2nonpr} are generic. 
\end{remark}

Observe from Eq.~\eqref{qlgstm} that for  $m > 2$  the same considerations as $m=2$ apply  when the  criterion in Eq.~\eqref{qlgcond} is used.    Knowing this, we consider the  cell structure in Eq.~\eqref{newcell} and show how  a grouping of two such cells, i.e., $m=2$,  is incompatible with a QLGA when  $S_1=S_2=S$. The new ($m=2$)  cell $0$, for example,  is formed by the single cells circled $0$ and $+1$  in Fig.~\ref{figqbg2}.
 When we refer to cell $y$, we  implicitly view this larger cell $0$ as the prototype. That said, cell $y$ is
\begin{equation*}
(W^{2y-1}_1\otimes W^{2y}_2) \otimes (W^{2y}_1\otimes W^{2y+1}_2),  \\
\end{equation*}
and the same applies to  the neighbors, where  the neighborhood is  (in units of this larger cell) $\mathcal{E} = \{e_1 = -1, e_2 = 0, e_3 = +1\}$. We begin by looking at $\mathcal{D}_{{y+1},y}$. We would like to show that 
\begin{equation} \label{dimyyp1}
\text{dim } \mathcal{D}_{{y+1},y} < \text{dim } M_2(\mathbb{C}) = 4.
\end{equation}
This involves a diagram chase. First, we see that  the only influence on cell $y$ from cell $y+1$ is from the tensor factors in the following set with  the arrows showing their endpoints after advection but before the action of respective $S_1$'s,
\begin{alignat}{3} 
&W^{2y+1}_1 && \mapsto  && W^{2y}_1,     \nonumber \\
&W^{2y+2}_1 && \mapsto  && W^{2y+1}_1.     \nonumber \\
\end{alignat}
Therefore, we start with an operator which is a finite sum of elements  of the form (we include only the  identity factors  that matter)
\begin{equation} \label{stropr}
(a^{2y}_1\otimes \mathbb{I}^{2y}_2) \otimes (b^{2y+1}_1\otimes \mathbb{I}^{2y+1}_2) \otimes (\mathbb{I}^{2y+2}_1\otimes \mathbb{I}^{2y+2}_2),  
\end{equation}
where $a^{2y}_1, b^{2y+1}_1 \in M_2(\mathbb{C})$. This sum  is acted on by the relevant $S_1$'s to give a finite sum of elements of the form
\begin{equation} \label{stropr2}
\mathbb{I}^{2y}_2 \otimes S_1^\dag(a^{2y}_1 \otimes \mathbb{I}^{2y+1}_2) S_1 \otimes S_1^\dag(b^{2y+1}_1\otimes  \mathbb{I}^{2y+2}_2) S_1 \otimes \mathbb{I}^{2y+2}_1,  
\end{equation}
where we have exhibited  the indices that are acted on by the $S_1$'s. Since  $\mathcal{D}_{{y+1},y}$ is  non-identity only  on cell $y$, we see  that after the action of $S_2$'s on the above, we get a finite sum of  elements of the form
\begin{equation} \label{stropr3}
(g^{2y}_1\otimes h^{2y}_2) \otimes (\mathbb{I}^{2y+1}_1\otimes s^{2y+1}_2) \otimes (\mathbb{I}^{2y+2}_1\otimes \mathbb{I}^{2y+2}_2).  
\end{equation}
We observe that $S_2 = S$,  and  that  conjugation by local (in this case $S_2$ action on pairs of qubits) unitary operators cannot change an ``entangled'' (not a product) operator to an ``unentangled''  operator (a product).  This implies in particular that after the conjugation action of $S_1$'s (before the action of $S_2$'s), the elements in~\eqref{stropr2} are of the form
\begin{equation*} 
(t^{2y}_1\otimes \mathbb{I}^{2y}_2) \otimes (\mathbb{I}^{2y+1}_1\otimes p^{2y+1}_2) \otimes (\mathbb{I}^{2y+2}_1\otimes \mathbb{I}^{2y+2}_2).  
\end{equation*}
This further implies that the original element~\eqref{stropr}  must be  of the form
\begin{equation} \label{stropr5}
(a^{2y}_1\otimes \mathbb{I}^{2y}_2) \otimes (\mathbb{I}^{2y+1}_1\otimes \mathbb{I}^{2y+1}_2) \otimes (\mathbb{I}^{2y+2}_1\otimes \mathbb{I}^{2y+2}_2).  
\end{equation}
To show~\eqref{dimyyp1}, we need to find  an element of the  form~\eqref{stropr5} whose image after the action of $S_1$'s and $S_2$'s is not of the form~\eqref{stropr3}.  These  elements abound. For instance, using 
\begin{equation*}
a^{2y}_1 = \begin{pmatrix}
1&0\\
0&0\\
\end{pmatrix}.
\end{equation*}
in~\eqref{stropr5} works. This shows that 
\begin{equation*}
\text{dim } \mathcal{D}_{{y+1},y} < \text{dim } M_2(\mathbb{C}) = 4.
\end{equation*}
By symmetry,
\begin{equation*} 
\text{dim } \mathcal{D}_{{y-1},y} < \text{dim } M_2(\mathbb{C}) = 4.
\end{equation*}
Such reasoning  in the context of $\mathcal{D}_{{y},y}$ also shows that 
\begin{equation*} 
\text{dim } \mathcal{D}_{{y},y} = \bigl(\text{dim } M_2(\mathbb{C})\bigr)^2 = 16.
\end{equation*}
Combined, these imply
\begin{equation*}
\text{dim }\mathcal{A}_y > \text{dim }\begin{rm}{span }\end{rm}(\prod_{k \in  \mathcal{E}} \mathcal{D}_{{y-k},y}).
\end{equation*}
Thus this cell structure is also not compatible with a QLGA.  It is clear that any other cell constructions that are based on single cells  other than the ones just considered, i.e., constructed through  other advection operators, can   be similarly shown to be incompatible with a  QLGA description. For $m > 2$ the argument follows the same steps. Hence this  shows that there is a pair of $S_1$, $S_2$ such that the QCA given in~\eqref{soqca} and Fig.~\ref{figqbg1} is not a QLGA for any cell construction.
\end{proof}

We  call a neighborhood, hence a QCA, {\it trivial}, if there is only one element in the neighborhood. When concerned with a  QLGA, in which  the neighborhood determines the advection,  we also refer to   the advection as {\it trivial\/} when the neighborhood is trivial, i.e.,  $\mathcal{E}=\{e_1= \cdots = e_d\}$ (here $d$ is the number of tensor factors of the cell Hilbert space $W = \bigotimes_{j=1}^d W_j$).

Generalizing the above result, we state the following.
\begin{conjecture}
Suppose $\mathcal{H}_\mathcal{C}$ is a Hilbert space of finite configurations with the cell Hilbert space $W = \bigotimes_{j=1}^d W_j$. Let  $\sigma_1$ and $\sigma_2$ be two non-trivial advection operators on $\mathcal{H}_\mathcal{C}$. Then there exist  unitary transformations $S_1$ and $S_2$ on $W$, such that the  QCA formed by concatenating  the QLGA $(\sigma_1, S_1)$ and $(\sigma_2, S_2)$, i.e., given by the global evolution $\mathcal{G}$, 
 \begin{equation*}  
\mathcal{G} = \hat S_2 \sigma_2  \hat S_1 \sigma_1,
\end{equation*} 
is not equivalent to a QLGA for any cell construction.
\end{conjecture}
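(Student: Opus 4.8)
The plan is to follow the architecture of the proof of the preceding Theorem, pushing the word ``similarly'' used there to full generality. Namely: reduce to the local QLGA criterion; choose $S_1$ and $S_2$ generically so that conjugation by either one never turns a partially-trivial operator into a product operator across the subcell tensor decomposition of $W=\bigotimes_{j=1}^d W_j$; and then show that under this choice the patches $\mathcal{D}_{z,y}$ are too small for the criterion \eqref{qlgcond} to hold, \emph{for every} cell construction.

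\emph{Reduction.} By Corollary~\ref{corS}, once a cell construction is fixed the resulting QCA is a QLGA if and only if \eqref{qlgcond} holds, so it suffices to produce $S_1,S_2$ for which \eqref{qlgcond} fails for each cell construction. A cell construction is a pair $(\sigma_c,m)$. Since advections are mutually commuting subcell shifts, conjugating $\mathcal{G}$ by $\sigma_c$ and regrouping $m$ standard cells rewrites the evolution as $(\sigma_c^{-1}\hat S_2\sigma_c)\,\sigma_2\,(\sigma_c^{-1}\hat S_1\sigma_c)\,\sigma_1$, in which the two scattering layers are again $S_1,S_2$, merely ``spread'' over a bounded window of new cells, while the advections $\sigma_1,\sigma_2$ are \emph{unchanged}. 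Because each $\sigma_i$ is non-trivial, neither can be cell-local with respect to any constructed cell (a cell-local advection is trivial); hence in every cell construction at least one subcell of a constructed cell is moved across a cell boundary by $\sigma_1$ or by $\sigma_2$ at some stage.

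\emph{Choice of $S_1,S_2$ and the patch estimates.} I would impose on $S_1$ and $S_2$ the natural generalization of property \eqref{pr2nonpr}: conjugation by $S_i$ of any operator that is the identity on a non-empty set of subcell factors is a product $\bigotimes_j B_j$ across the subcells only if it is a scalar; and, more generally, conjugation by a cell-wise-local unitary cannot convert a non-product operator into a product one. Such pairs are generic (the exceptions lie in a proper closed subvariety of $U(d_W)\times U(d_W)$), so a single pair avoiding the exceptional loci of all relevant cell constructions can be chosen, provided those loci stabilize once $m$ exceeds the diameter of the neighborhood. With this choice I expect, for any cell construction: (i) the neighborhood of $\mathcal{G}$ in new-cell units is non-trivial, and in standard units equals $-(\mathcal{E}_1+\mathcal{E}_2)$, which has at least three elements since $\mathcal{E}_1$ and $\mathcal{E}_2$ each have at least two; (ii) for every neighbor $z\neq y$ the patch $\mathcal{D}_{z,y}=\mathcal{G}^\dag\mathcal{A}_z\mathcal{G}\cap\mathcal{A}_y$ is trivial, because an operator originating in cell $z$ reaches cell $y$ only through a small set of subcells and, after the two scattering layers, the non-factoring property forces it to remain entangled with the cells it traversed, so nothing but a scalar is supported on cell $y$ alone; and (iii) the self-patch $\mathcal{D}_{y,y}$ is \emph{strictly} contained in $\mathcal{A}_y$, because the subcell(s) that $\sigma_1$ or $\sigma_2$ pushes across a cell boundary return, if at all, only entangled, so $\mathcal{D}_{y,y}$ is the identity on those factors. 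Combining, $\operatorname{span}(\prod_{k\in\mathcal{E}}\mathcal{D}_{y-k,y})=\mathcal{D}_{y,y}\subsetneq\mathcal{A}_y$, so \eqref{qlgcond} fails for that cell construction, and therefore for all of them, which is the claim.

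\emph{Main obstacle.} The genuinely hard part is establishing (ii) and (iii) uniformly over all cell constructions --- arbitrary $\sigma_c$, arbitrary $m$, arbitrary lattice dimension, arbitrary $W=\bigotimes W_j$ --- exactly the content hidden behind ``similarly'' in the Theorem's proof. Three points are delicate: identifying precisely which subcells of a constructed cell ``leak'' and why they never return cleanly, \emph{uniformly in $\sigma_c$} (the only leverage is that $\sigma_c$ commutes with both $\sigma_1$ and $\sigma_2$, so it cannot resolve the directional conflict between two non-trivial advections); formulating the sharp generalization of \eqref{pr2nonpr} that is simultaneously strong enough to force patch-triviality and still generic (the paper's own remark notes that even in the $2\times2$ case the property actually needed is weaker than \eqref{pr2nonpr}, so the right hypothesis is not obvious); and verifying that a single generic pair $(S_1,S_2)$ defeats the infinitely many cell constructions at once, which requires the exceptional conditions to stabilize for large $m$.
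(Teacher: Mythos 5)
The statement you are proving is stated in the paper as a \emph{Conjecture}: the authors explicitly do not prove it, and they offer it precisely because the ``similarly'' steps in their Theorem's proof do not obviously extend to arbitrary lattice dimension, arbitrary $W=\bigotimes_{j=1}^d W_j$, and arbitrary non-trivial advections. Your proposal is a sensible programme that mirrors the architecture of that Theorem, but it is not a proof, and you say as much yourself: points (ii) and (iii) --- triviality of the off-diagonal patches $\mathcal{D}_{z,y}$ and properness of $\mathcal{D}_{y,y}$ --- are asserted by analogy rather than derived, and they are exactly the open content. Concretely, three steps would need to be supplied before this counts as a proof. First, the generalization of property~\eqref{pr2nonpr} to an arbitrary tensor decomposition is not formulated: for $d>2$ subcells of possibly different dimensions it is not even clear what the right non-factoring condition on a single unitary $S\in U(d_W)$ is, nor that unitaries satisfying it exist (the paper's own Remark warns that already in the qubit case the property actually needed is weaker than~\eqref{pr2nonpr}, so ``impose the natural generalization'' is not a well-posed instruction). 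Second, the claim that a single generic pair $(S_1,S_2)$ defeats all cell constructions at once rests on the exceptional loci ``stabilizing'' for large $m$ and on there being only boundedly many relevant advections $\sigma_c$ up to the symmetries of the problem; neither is argued. Third, your identification of the neighborhood of $\mathcal{G}$ with $-(\mathcal{E}_1+\mathcal{E}_2)$ conflates the Minkowski sum with the diagonal sum $\{e_j^{(1)}+e_j^{(2)}\}_j$ that actually governs the composite advection; the two happen to coincide in the paper's example but not in general, and the minimal causal neighborhood further depends on how much the $S_i$ mix subcells.

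There is also an edge case that your reduction does not address and that may even falsify the conjecture as literally stated: non-triviality of $\sigma_i$ is defined purely by the neighborhood $\mathcal{E}_i$ not being a singleton, with no hypothesis that the subcells being displaced carry any information. If the only subcell $W_j$ assigned a displacement different from the others is one-dimensional, then $\sigma_i$ is non-trivial by definition yet acts as the identity on $\mathcal{H}_\mathcal{C}$, and $\mathcal{G}=\hat S_2\hat S_1$ is cell-local, hence a QLGA with trivial advection for every choice of $S_1,S_2$. Any correct proof must either exclude this case or add a hypothesis such as $\dim W_j\geq 2$ for the differently-advected subcells. In short: your plan is the right starting point and faithfully extends the paper's method, but the statement remains a conjecture, and the gaps you flag in your final paragraph are not technicalities --- they are the theorem.
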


One might imagine that such constructions can yield the most general QCA. Such is not the case as shown by the following. 

\begin{proposition} \label{qcabqlga}
There exist finite length QCA that are not equivalent to a  concatenation of finitely many QLGA. 
\end{proposition}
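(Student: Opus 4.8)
\emph{Step 1 (normal form of a concatenation).}  The plan is first to see what ``a concatenation of finitely many QLGA'' can possibly be.  Write a candidate as $\mathcal{G}=\hat S_n\sigma_n\cdots\hat S_1\sigma_1$, absorbing any cell constructions performed between the factors into extra advections at the price of a single blocking of $m$ cells.  Conjugating all advections to the right gives
$\mathcal{G}=\bigl[\hat S_n\,(\Sigma_n\hat S_{n-1}\Sigma_n^{-1})\cdots(\Sigma_2\hat S_1\Sigma_2^{-1})\bigr]\,\Sigma$, where $\Sigma_k=\sigma_n\cdots\sigma_{k+1}$ and $\Sigma=\sigma_n\cdots\sigma_1$.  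Since each $\hat S_i$ is a cell-wise product of a single unitary, cf.~\eqref{scateqn}, and an advection merely relocates subcell tensor factors, cf.~\eqref{propeqn}, every factor $\Sigma_k\hat S_{k-1}\Sigma_k^{-1}$ is a translation-invariant product of unitaries supported on uniformly bounded blocks, so the bracketed operator is a finite-depth quantum circuit $C$; and $\Sigma$, a product of advections, is itself an advection (a ``partial shift'').  Hence \textbf{every concatenation of finitely many QLGA equals, after one blocking of cells, $C\circ\Sigma$ with $C$ a finite-depth circuit and $\Sigma$ an advection, with no enlargement of the cell Hilbert space beyond the blocking.}  A single QLGA for any cell construction is already of this shape, so the proposition is strictly stronger than the theorem above; and the $m=1$ and larger single-cell constructions are independently excluded by Corollary~\ref{corS}~\ref{cor2} and the patch-algebra inspection used in that proof.

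\emph{Step 2 (a cell-construction-invariant obstruction).}  The index of~\cite{bib:itodqwca} is multiplicative, trivial on circuits, a product of subcell dimensions on an advection, and invariant under blocking, so it alone cannot exclude membership in the class of Step~1; a finer invariant is required.  For Clifford QCA one is supplied by the classification of $(1+1)$-dimensional Clifford QCA of Schlingemann, Vogts and Werner~\cite{svw:oscqca}: there is a Clifford QCA $\mathcal{G}_0$ carrying a nonzero value of that invariant, whereas it vanishes on every advection and on every Clifford circuit, hence on every composite $C\circ\Sigma$, and --- crucially --- it is unchanged by blocking.  Transported into the finite-configurations formalism of Section~\ref{section:prelim} (an arbitrary cell basis on a torus, and a compatible uniform product quiescent state on $\mathbb{Z}$), $\mathcal{G}_0$ is a bona fide QCA in our sense, and by Step~1 it is not equivalent to a concatenation of finitely many QLGA, which proves the proposition.

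\emph{Expected main obstacle.}  Everything hard is in Step~2: one must pin down an invariant of $\mathcal{G}$ that (i) is unchanged by the cell constructions admitted here, in particular by arbitrary blockings, (ii) vanishes identically on composites $C\circ\Sigma$ of a finite-depth circuit and an advection, and (iii) is nonzero on the witness.  For Clifford QCA this invariant is concrete and computable from the symplectic / Laurent-polynomial form of the local rule, which is precisely why the witness is taken to be Clifford; obtaining an equally robust obstruction for a general, non-Clifford QCA would be considerably harder and is not attempted here.
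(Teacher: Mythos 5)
Your proposal diverges completely from the paper's proof, and Step~2 --- which you yourself flag as containing ``everything hard'' and as ``not attempted here'' --- is not merely an unfinished detail: it is where the approach breaks. First, the invariant you want must vanish on every composite $C\circ\Sigma$ in which $C$ is a finite-depth circuit built from \emph{arbitrary} unitary scatterers $S_i$; an invariant whose vanishing you can only establish on Clifford circuits says nothing about whether your Clifford witness $\mathcal{G}_0$ admits a decomposition with non-Clifford $C$. Second, and fatally, under your liberal reading of ``concatenation'' (arbitrary blockings between layers, absorbed into the normal form of Step~1), the very index/structure theory of~\cite{bib:itodqwca} that you invoke works against you: every one-dimensional QCA is, after a suitable blocking, a partial shift composed with a two-layer block circuit, and such a composite is itself a concatenation of finitely many QLGA on the blocked cells. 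So no invariant satisfying your properties (i)--(iii) can exist, and the proposition would be false as you have set it up. The statement is true only because ``concatenation of finitely many QLGA'' here means the constituent QLGA act on the \emph{same} cell Hilbert space $W$ as the given QCA, with no re-blocking; that restriction is precisely the handle the paper exploits.

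The paper's actual proof is a two-line dimension count that you missed. Take $W$ of prime dimension (qubits). By Corollary~\ref{corS}~\ref{cor2}, any QLGA on $W$ must satisfy $d_W=\prod_{j\in\mathcal{E}}d_{W_j}$, so all but one subcell is one-dimensional and the QLGA is trivial (single-element effective neighborhood); a concatenation of trivial QLGA is trivial. But there exist non-trivial finite length qubit QCA, e.g.\ the Clifford QCA of example~1.1 in~\cite{svw:oscqca}. No normal form, no index computation, and no invariant beyond the factorization of $d_W$ is needed.
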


\begin{proof}
If a QCA with cell Hilbert space of prime dimension is a concatenation of  QLGA,  then the constituent QLGA must have cell Hilbert spaces of prime dimension (the same as the QCA). Since  every   QLGA with cell Hilbert space of prime dimension is trivial, and trivial QLGA when  concatenated can only yield a trivial QCA, this implies that  the original QCA must be  trivial. But there exist non-trivial finite length QCA on qubits,  for  example some Clifford QCA (CQCA)~\cite{svw:oscqca}, in particular the CQCA  of example $1.1$ in~\cite{svw:oscqca} (such non-trivial CQCA are more generally defined in~\cite{svw:oscqca} for other prime dimensions as well).
\end{proof}
\begin{remark}
In~\cite{svw:oscqca}, the QCA definition is in the Heisenberg or operator picture. Since we are interested in unitary evolution, or the Schr\"{o}dinger picture, we may  only cite  finite length versions of the CQCA, namely those  for which the global evolution operator always exists. 
\end{remark}
This result indicates that there are interpretations of QCA beyond the regime of QLGA and concatenated constructions. The proof that we have given of the   above proposition is valid only for finite length QCA.  To the authors' knowledge,  this  proposition   is unproven for infinite length QCA as defined in this paper in the Schr\"{o}dinger picture. 

\section{Conclusion} \label{sec:conc}

In this paper, we constructed  a QCA, on a one-dimensional lattice, from a concatenation of two simple QLGA such that the constructed QCA is itself not a QLGA.  In other words this QCA has no particle interpretation at the time scale at which the QCA dynamics are homogeneous. The  proof of its non-QLGA behavior relies on application of  the condition developed in~\cite{bib:slwqcaqlga} characterizing when a QCA is a QLGA. In the same paper, it was noted that the question of complete QCA classification is still open. We hope this construction is a step in the process of answering that question. Our analysis suggests the  conjecture   that a more general result is possible, in which arbitrary cell dimension, lattice dimension, and any neighborhood scheme (except the trivial neighborhood) can be used in constructing such QCA from QLGA. For finite length QCA, we showed by citing the Clifford QCA in~\cite{svw:oscqca}, that not all QCA can be expressed as concatenations of  finitely many QLGA.    In the larger theme of quantum simulations, and in view of  the important work of Jordan,  {\it et al.}~\cite{bib:JLP} on simulation of $\phi^4$ quantum  field theory, the question arises as to the efficiency with which  a quantum  simulation model  might simulate physics without a particle description.  


\begin{acknowledgements}
This work was partially supported by AFOSR grant FA9550-12-1-0046.
\end{acknowledgements}

%


\end{document}